\theoremstyle{plain}
\newtheorem{theorem}{Theorem}
\newtheorem{statement}[theorem]{Statement}
\newtheorem{lemma}[theorem]{Lemma}
\theoremstyle{remark}
\newtheorem{remark}{Remark}
\newcommand{\Integer}{\mathbb{Z}}
\newcommand{\const}{\mathop{\rm const}}
\title{Integrable M\"obius invariant evolutionary lattices\\ of second order}
\author{V.E. Adler\footnote{L.D. Landau Institute for Theoretical Physics,
Ac.~Semenov str.~1-A, 142432 Chernogolovka, Russian Federation. E-mail:
adler@itp.ac.ru}}
\date{April 29, 2016}
\begin{document}\thispagestyle{empty}
\maketitle

\begin{abstract}
We solve the classification problem for integrable lattices of the form
$u_{,t}=f(u_{-2},\dots,u_2)$ under the additional assumption of invariance with
respect to the group of linear-fractional transformations. The obtained list
contains 5 equations, including 3 new. Difference Miura type substitutions are
found which relate these equations with known polynomial lattices. We also
present some classification results for the generic lattices.
\medskip

\noindent Key words: integrability, symmetry, conservation law, M\"obius invariant,
cross-ratio.
\end{abstract}

\section{Introduction}\label{s:intro}

The known integrable differential-difference equations of the form
\begin{equation}\label{utm}
 u_{,t}=f(u_{-m},\dots,u_m),\quad u_j=u(t,n+j)
\end{equation}
include, first of all, the Bogoyavlensky lattices \cite{Narita_1982, Itoh_1987,
Bogoyavlensky_1988, Bogoyavlensky_1991} and their modifications related by
Miura type substitutions \cite{Suris_2003, Svinin_2011, Berkeley_Igonin_2015}.
More general families of the lattices were considered in
\cite{Hu_Clarkson_Bullough, Adler_Postnikov_2011, Garifullin_Yamilov_2012,
Garifullin_Mikhailov_Yamilov_2014}, relations with other discrete models were
studied in \cite{Papageorgiou_Nijhoff_1996, Nijhoff_1996, Suris_2003,
Levi_Petrera_Scimiterna_2007, Adler_Postnikov_2014}. The classification of
integrable equations (\ref{utm}) at $m=1$ was obtained by Yamilov
\cite{Yamilov_1983, Yamilov_1984, Yamilov_2006}. However, already the case
$m=2$ turns out to be essentially more complicated and it remains an open
problem till now. The goal of this paper is to obtain some preliminary results
for $m=2$, including the classification of the lattices which are invariant
with respect to the linear-fractional transformations of the variable $u$.

Let us remind that the symmetry approach is a most effective tool for solving
of classification problems \cite{Mikhailov_Shabat_Yamilov_1987,
Levi_Yamilov_1997}. It is based on the fact that integrable equations admit
generalized symmetries and conservation laws of arbitrarily high order.
Adopting this property as a definition, one can build an infinite sequence of
relations which must hold for the right hand side of any integrable lattice
(\ref{utm}). Solving of the classification problem amounts to the analysis of
this over-determined system of differential-functional equations. Moreover, one
can expect that several first relations give not only the necessary, but also
the sufficient integrability conditions. For instance, in the case $m=1$ it is
sufficient, according to Yamilov \cite{Yamilov_1983}, to consider just 3
conditions, namely
\begin{equation}\label{1.cond}
 D_t(\log f_1)=(T-1)(\sigma),\quad T(rf_{-1})+rf_1=0,\quad
 D_t(\log r)+2f_0=(T-1)(\mu).
\end{equation}
Here and further on, $T:u_j\to u_{j+1}$ is the shift operator, $D_t$ is the
evolutionary differentiation in virtue of the lattice equation, the subscript
$j$ denote the partial derivative with respect to the variable $u_j$. The
integrability conditions mean that the function $f$ should be such that
equations (\ref{1.cond}) be solvable with respect to the unknown functions
$\sigma,r,\mu$ of the variables $u_j$.

The derivation and testing of integrability conditions for a {\em concrete}
equation are, in principle, not difficult for any $m$ \cite{Adler_2014,
Adler_2016}. However, the classification requires the analysis of these
conditions for an {\em undetermined} function $f$, and this problem turns out
to be extremely difficult. Before solving it in the general setting, it makes
sense to consider special cases under one or another simplifying assumption. In
this paper, the role of such additional condition is played by the invariance
of the lattice under the group of M\"obius transformations $u_j=\frac{\alpha
u_j+\beta}{\gamma u_j+\delta}$. This simplifies the problem drastically.
Indeed, one can easily see that the classification problem completely
disappears at $m=1$, because there is just one such lattice,
\begin{equation}\label{i.SV}
 u_{,t}=Y=\frac{(u_1-u)(u-u_{-1})}{u_1-u_{-1}}.
\end{equation}
This equation turns out to be integrable; a simplest way to demonstrate this is
to rewrite it in terms of the cross-ratios which satisfy the Volterra lattice:
\[
 X_{,t}=X(X_1-X_{-1}),\qquad
 X=\frac{(u_1-u)(u_{-1}-u_{-2})}{(u_1-u_{-1})(u-u_{-2})}.
\]
This example is well known and equation (\ref{i.SV}) is often called the
Schwarzian Volterra lattice, by analogy with the Schwarzian KdV equation in the
continuous case. At $m=2$, the general form of M\"obius invariant lattices
reads
\begin{equation}\label{i.ut}
 u_{,t}=YF(X,T(X)),
\end{equation}
that is, the unknown function of 5 variables is replaced with a function of
just 2 variables. The analysis of the integrability conditions in this case is
not quite trivial, but comprehensible.

The outline of the paper is the following. Section \ref{s:cond} contains first
6 integrability conditions for the lattices (\ref{utm}) at $m=2$, as well as
few simplest corollaries from these conditions. By use of these relations, we
found the general form of integrable lattices in section \ref{s:->3}, with the
right hand side represented through functions of 3 variables. Further
considerations require much efforts and we postpone them for the future work.
Section \ref{s:class} contains the main result of the paper, the classification
of equations of the special form (\ref{i.ut}). The answer is given by a list of
5 equations (Theorem \ref{th:list}). One of them is, as expected, the higher
symmetry of the lattice (\ref{i.SV}), another one is the Schwarzian version of
the Bogoyavlensky lattice \cite{Papageorgiou_Nijhoff_1996, Nijhoff_1996}. The
rest equations were, probably, not studied before. In section \ref{s:subst} we
discuss the Miura type substitutions which link these equations with the
polynomial lattices mentioned above, from the papers
\cite{Adler_Postnikov_2011, Garifullin_Yamilov_2012}. The concluding section
\ref{s:generalizations} contains some generalizations of these examples.

\section{Necessary integrability conditions}\label{s:cond}

Let the lattice (\ref{utm}) be integrable, that is, let it admit generalized
symmetries and conservation laws of order arbitrarily large. This makes
possible to prove the solvability of equations
\begin{equation}\label{fRG}
 D_t(G)=[f_*,G],\qquad D_t(R)+f_*^\dag R+Rf_*=0,
\end{equation}
where $f_*=f_mT^m+f_{m-1}T^{m-1}+\dots+f_{-m}T^{-m}$ is the linearization
operator of equation (\ref{utm}), $G$ and $R$ are formal Laurent series with
respect to the powers of $T$ or $T^{-1}$, with the coefficients depending on
the dynamical variables $u_j$ \cite{Yamilov_2006,
Mikhailov_Shabat_Yamilov_1987, Levi_Yamilov_1997}. Moreover, several first
terms of the series $G$ may be chosen the same as in the operator $f_*$,
without loss of generality \cite{Adler_2014}. More precisely, equations
(\ref{fRG}) admit solutions of the form
\begin{gather*}
 G=f_*+\sigma+\theta T^{-1}+\omega T^{-2}+\ldots,\qquad
 \tilde G=f_*+\tilde\sigma+\tilde\theta T+\tilde\omega T^2+\ldots,\\
 R=rT^l+sT^{l-1}+\ldots~,\quad r\ne0,
\end{gather*}
with some exponent $l$ in the range $0\le l\le m-1$, and, additionally, the
relation $\tilde G^\dag R=-RG$ holds, where $(aT^j)^\dag=T^{-j}a$. Solvability
of equations (\ref{fRG}) with respect to the coefficients of these series
defines the necessary integrability conditions.

Further on, we consider only the case $m=2$,
\begin{equation}\label{ut2}
 u_{,t}=f(u_{-2},u_{-1},u,u_1,u_2),
\end{equation}
and assume that the following non-degeneracy conditions are fulfilled:
$f_{-2}\ne0$, $f_2\ne0$, as well as $f_{-1}\ne0$ or $f_1\ne0$ (if
$f_1=f_{-1}=0$ then the equation splits into the pair of first order equations
on the odd and even sublattices). An easy computation proves that equations
(\ref{fRG}) yield the following conditions, in few first orders of $T$.

\begin{statement}\label{st:cond}
If equation (\ref{ut2}) is integrable then there exist functions $\sigma$,
$\theta$, $\omega$, $r$, $s$, $\mu$ of the dynamical variables $u_j$ and an
exponent $l$ equal to 0 or 1, such that the following relations hold:
\begin{align}
\label{2.cond-1}
 & D_t(f_2)=f_2(T^2-1)(\sigma),\\
\label{2.cond-2}
 & D_t(f_1)-f_1(T-1)(\sigma)=f_2T^2(\theta)-T^{-1}(f_2)\theta,\\
\label{2.cond-3}
 & D_t(f_0+\sigma)-(T-1)(T^{-1}(f_1)\theta)
   =(T^2-1)(T^{-2}(f_2)\omega),\\
\label{2.cond-4}
 & T^2(rf_{-2})+rT^l(f_2)=0,\\
\label{2.cond-5}
 & T^2(sf_{-2})+sT^{l-1}(f_2)+T(rf_{-1})+rT^l(f_1)=0,\\
\label{2.cond-6}
 & D_t(\log r)+2f_0=(T-1)(\mu).
\end{align}
\end{statement}

\begin{remark}
If we set $f_2=f_{-2}=0$ and $l=0$ then equations (\ref{2.cond-1}),
(\ref{2.cond-4}) disappear, equations (\ref{2.cond-2}), (\ref{2.cond-5}),
(\ref{2.cond-6}) turn into conditions (\ref{1.cond}) and equation
(\ref{2.cond-3}) takes the form $D_t(f_0+\sigma)=(T-1)(\nu)$. This gives on
more condition for the first order lattices, but it turns out to be redundant
(that is, it holds automatically if the conditions (\ref{1.cond}) hold).

One can conjecture that integrability conditions
(\ref{2.cond-1})--(\ref{2.cond-6}) are not only necessary, but also sufficient,
like conditions (\ref{1.cond}) in the case $m=1$. The answer on this question
may be obtained only after the complete classification. The examples exist
which demonstrate that no one condition can be dropped out of this set. For
instance, the lattice
\[
 u_{,t}=u_2-u_{-2}+c(u_1-u_{-1})^2,\quad c\ne0
\]
satisfies all conditions except for (\ref{2.cond-3}) (moreover, conditions
(\ref{2.cond-4})--(\ref{2.cond-6}) hold both for $l=0$ and $l=1$).
\end{remark}

\begin{remark}
Thanks to the relation $\tilde G^\dag=-RGR^{-1}$, the equations for
coefficients of the series $\tilde G$ are consequences of equations for $G,R$.
In spite of this, it is sometimes convenient to take these equations into
account, in order to obtain the corollaries which are not symmetric with
respect to the shifts. To this end, it is sufficient to use the substitution
\begin{equation}\label{tilde}
 (\sigma,\theta,\omega)\to(\tilde\sigma,\tilde\theta,\tilde\omega),\quad
 \partial_i\to\partial_{-i},\quad T^i\to T^{-i}.
\end{equation}
\end{remark}

Let us write down several simplest corollaries of the integrability conditions
which will be used in the next sections. To prove the following statement one
need, in fact, only the conditions (\ref{2.cond-1}), (\ref{2.cond-2}),
(\ref{2.cond-4}) and symmetry (\ref{tilde}).

\begin{statement}\label{st:rho}
If the lattice (\ref{ut2}) is integrable then $\rho=\log f_2$ satisfies the
following relations:
\begin{align}
\label{Z11}
 & \rho_{-2,2}=0, \\
\label{Z12}
 & T^2(\rho_{-2,1})f_{-2}+\rho_{-2,1}T(f_2)=0,\\
\label{Z13}
 & T(\rho_{-1,2})f_{-2}+T^{-1}(\rho_{-1,2})T(f_2)=0.
\end{align}
\end{statement}
\begin{proof}
If $l=1$ then the relation (\ref{Z11}) follows from equation (\ref{2.cond-4})
immediately, but the case $l=0$ requires more complicated reasoning. Counting
of the variables involved in the left hand sides of (\ref{2.cond-1}) and
(\ref{2.cond-2}) proves that functions $\sigma$ and $\theta$ may depend on
$u_{-4},\dots,u_2$ only. Then, applying $\partial_{-2}\partial_4$ to
(\ref{2.cond-1}) and $\partial_{-3}\partial_3$ to (\ref{2.cond-2}) yields,
respectively,
\[
 T^2(f_2)\rho_{-2,2}=T^2(\sigma_{-4,2}),\qquad f_1T(\sigma_{-4,2})=0.
\]
This implies $f_1\rho_{-2,2}=0$. We obtain also $f_{-1}\tilde\rho_{-2,2}=0$,
where $\tilde\rho=\log f_{-2}$, by use of the symmetry (\ref{tilde}). It
follows from (\ref{2.cond-4}) that, if one of the functions $\rho_{-2,2}$ or
$\tilde\rho_{-2,2}$ vanishes, then this is true for the second one as well.
Taking into account the non-degeneracy condition $f_{-1}\ne0$ or $f_1\ne0$, we
arrive to (\ref{Z11}).

In order to prove (\ref{Z12}), (\ref{Z13}), let us partially integrate equation
(\ref{2.cond-1}), by substitution
$\sigma=\hat\sigma-\rho_{-1}T^{-1}(f)-\rho_{-2}T^{-2}(f)$. This brings to an
equivalent equation with a function $\hat\sigma$ which depends on a reduced set
of variables:
\[
 (T^2(\rho_{-2})+\rho_0)f+(T^2(\rho_{-1})+\rho_1)T(f)+\rho_2T^2(f)
 =(T^2-1)(\hat\sigma(u_{-2},\dots,u_2)).
\]
The application of $\partial_{-2}\partial_3$ and $\partial_{-1}\partial_4$
brings to the desired equations.
\end{proof}

Several other useful relations can be derived by analogous calculations, for
instance, condition (\ref{2.cond-2}) implies the equations
\begin{align}
\label{Z21}
 & T(\rho_{-2,1})f_1
    +T(\rho_{-2})f_{1,2}-T^{-1}(\tilde\rho_2)f_{-1,2}=0,\\
\label{Z22}
 & T^{-1}(\tilde\rho_{-1,2})f_{-1}
    +T^{-1}(\tilde\rho_2)f_{-2,-1}-T(\rho_{-2})f_{-2,1}=0.
\end{align}

\section{Reducing to functions of three arguments}\label{s:->3}

In this section, we resolve condition (\ref{2.cond-4}) and relations from
Statement \ref{st:rho}. This casts the lattices (\ref{ut2}) to several basic
types, with the right hand side expressed through few arbitrary functions of no
more than three arguments.

\begin{theorem}\label{th:2.types}
Any integrable lattice (\ref{ut2}) belongs to one of the following types (types
I, II correspond to the case $l=0$ and the rest ones to the case $l=1$):
\begin{align*}
 &{\rm I}  && u_{,t}=b(T(a_{-1})-T^{-1}(a_1))+c,\\
 &{\rm II} && u_{,t}=b\exp\bigl(k(u)(T(a_{-1})-T^{-1}(a_1))\bigr)+c,\\
 &{\rm III}&& u_{,t}=T^{-1}(p)a_{-1}T(b)- pT^{-1}(a)b_1+c,\\
 &{\rm IV} && u_{,t}= \frac{T^{-1}(p)a_{-1}}{a+T(b)}
                     -\frac{pb_1}{T^{-1}(a)+b}+c,\\
 &{\rm V}  && u_{,t}=\frac{a_{-1,1}}{T^{-1}(A)A}+b,\quad A=T(a_{-1})-a_1,\\
 &{\rm VI} && u_{,t}=\frac{a_{-1,1}}{T^{-1}(A)A}+b,\quad
        A=\exp\bigl(p(T(a_{-1})-a_1)\bigr)-\frac{\alpha}{p},\quad
        \alpha=\const.
\end{align*}
In all equations, $a,b,c$ denote functions of $u_{-1},u,u_1$; $p=p(u,u_1)$.
\end{theorem}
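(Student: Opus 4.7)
The plan is to solve the constraints from Statement~\ref{st:rho} to pin down the structure of $f_2$, combine them with condition~(\ref{2.cond-4}) to split by $l$, and then integrate back to reconstruct $f$ in the six listed forms. Throughout I would work in parallel on $f_2$ and $f_{-2}$, using the symmetry~(\ref{tilde}) to transfer every relation derived for one into a relation for the other.

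First I would solve (\ref{Z11}): $\rho_{-2,2}=0$ gives the additive decomposition $\log f_2=\phi(u_{-2},u_{-1},u,u_1)+\psi(u_{-1},u,u_1,u_2)$, so $f_2=P(u_{-2},u_{-1},u,u_1)\,Q(u_{-1},u,u_1,u_2)$ for nonvanishing $P,Q$. Substituting this into (\ref{Z12}), (\ref{Z13}) and their counterparts for $f_{-2}$ produces separable equations of the form ``function of $u_{-2},\dots,u_1$ equals function of $u_0,\dots,u_2$'', so each side must reduce to a function of the common variables; this is a rigid structural restriction on $P$ and $Q$. Relations (\ref{Z21}), (\ref{Z22}) sharpen the picture further, controlling how three-variable functions may enter. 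Next I would bring in (\ref{2.cond-4}), rewritten as $T^2(r)/r=-T^l(f_2)/T^2(f_{-2})$: this is a telescoping equation for $r$, whose solvability imposes a $(T^2-1)$-total structure on the right-hand side. The two allowed values of $l$ produce the main dichotomy, since for $l=0$ the matching of shifts on $f_2$ and $f_{-2}$ is much tighter than for $l=1$, which is precisely what separates types I, II from types III--VI.

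Within each case I would integrate with respect to $u_2$ (and symmetrically $u_{-2}$) against the recovered structure of $f_2,f_{-2}$ to reconstruct $f$ up to a residual function of $u_{-1},u,u_1$. The sub-branches correspond to whether the separable ansatz yields polynomial or rational expressions and whether an intermediate integration produces a logarithm, which then appears exponentiated in II and VI. The main obstacle I expect is the case analysis itself: each relation from Statement~\ref{st:rho} only fixes $f_2$ up to a ratio or a logarithmic derivative, so at every step one must integrate up and verify, using the remaining conditions and the symmetric versions under~(\ref{tilde}), that no extra admissible branches slip through outside the six listed. The delicate point is recognising the degenerations (a pole collapsing, an integration constant vanishing) which separate the four $l=1$ subtypes III--VI from one another.
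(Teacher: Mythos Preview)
Your plan reverses the order the paper actually uses, and that reversal is where the gap lies. The paper does \emph{not} begin by decomposing $f_2=PQ$ from (\ref{Z11}) and then layering on (\ref{2.cond-4}); it starts from (\ref{2.cond-4}) and uses it as a \emph{parametrization} rather than a solvability constraint. Writing $rf_{-2}=T^{-1}(R)$ turns (\ref{2.cond-4}) into a relation that simultaneously expresses $f_2$ and $f_{-2}$ through the auxiliary functions $r$ and $R$ of only three or four variables. In the case $l=0$ this immediately forces $f$ to depend on $u_{\pm2}$ only through the single combination $v=T(a_{-1})-T^{-1}(a_1)$ (with $a_{-1,1}=R$), after which (\ref{Z11}) becomes an \emph{ordinary} differential equation $F''=kF'$ in $v$; the dichotomy $k=0$ versus $k\ne0$ is exactly the split between types I and II. Your multiplicative splitting $f_2=PQ$ from (\ref{Z11}) alone does not produce this collapse to one effective variable, so you never reach a point where the branching is governed by a simple ODE.

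For $l=1$ the missing structural step is sharper. After the parametrization (\ref{frR}), the relations (\ref{Z12})--(\ref{Z13}) are not merely ``separable equations'' but become the pair $(\log r)_{-1,2}=\alpha R$, $(\log R)_{-1,2}=\beta R$ with constants $\alpha,\beta$; the second is a Liouville equation whose general solution accounts for the $\log(a+T(b))$ in type IV. The further split III--IV versus V--VI is controlled by the cross-differentiation quantity $\lambda=f_{-2,2}/(T^{-1}(R)R)$, which is forced to be a function of $u_{-1},u,u_1$ alone: $\lambda=0$ gives III, IV (then $\beta=0$ or $\beta\ne0$), while $\lambda\ne0$ gives V, VI (then $p=0$ or $p\ne0$). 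None of these organizing quantities---$R$, $k$, $\lambda$, $\beta$, $p$---appear in your outline, and the phrases ``polynomial versus rational'' and ``a logarithm appears'' do not by themselves locate the four $l=1$ subtypes. The essential idea you are missing is to let (\ref{2.cond-4}) define $R$ first, so that (\ref{Z11})--(\ref{Z13}) become low-dimensional equations with explicitly solvable branching.
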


\begin{proof}
{\em Case $l=0$.} Let us denote $rf_{-2}=T^{-1}(R)$, then equation
(\ref{2.cond-4}) takes the form
\[
 T(R)f_{-2}+T^{-1}(R)f_2=0,\quad R=R(u_{-1},u,u_1).
\]
Let $R=a_{-1,1}$, then $f=F(u_{-1},u,u_1,v)$, where $v=T(a_{-1})-T^{-1}(a_1)$.
Substitution into equation $\rho_{-2,2}=0$ gives $(\log F')''=0$, where prime
denotes the derivative with respect to $v$; therefore, $F''=k(u_{-1},u,u_1)F'$.
The solutions of this equation bring to the types I and II, in the cases $k=0$
and $k\ne0$, respectively. In the case II, the fact that function $k$ does not
depend on $u_{\pm1}$ follows from conditions (\ref{Z12}), (\ref{Z13}). It is
easy to prove that these relations are equivalent to
\[
 (T-1)\left(\frac{k_1T(k_1)}{T(R)F'}\right)=0,\qquad
 (T-1)\left(\frac{T^{-1}(k_{-1})k_{-1}}{T^{-1}(R)F'}\right)=0
\]
and therefore
\[
 k_1T(k_1)=\alpha T(R)F',\qquad T^{-1}(k_{-1})k_{-1}=\beta T^{-1}(R)F',
\]
with constant $\alpha,\beta$. Differentiation of the first equation with
respect to $u_{-2}$ and the second one with respect to $u_2$ yields
$\alpha=\beta=0$, taking the equation $F''=kF'\ne0$ into account, so that
$k_{\pm1}=0$.
\smallskip

{\em Case $l=1$.} Let us denote $rf_{-2}=T^{-1}(R)$ and rewrite
(\ref{2.cond-4}) as the system
\begin{equation}\label{frR}
 f_{-2}=\frac{T^{-1}(R)}{r},\quad f_2=-\frac{R}{T^{-1}(r)},\quad
 r=r(u_{-1},u,u_1,u_2),\quad R=R(u_{-1},u,u_1,u_2).
\end{equation}
This implies $T(\rho_{-2,1})=-(\log r)_{-1,2}$, $\rho_{-1,2}=(\log R)_{-1,2}$
and the comparison of (\ref{Z12}), (\ref{Z13}) with the relation
$T(R)f_{-2}+T^{-1}(R)T(f_2)=0$ brings to equations
\begin{equation}\label{rR}
 (\log r)_{-1,2}=\alpha R,\qquad (\log R)_{-1,2}=\beta R,
\end{equation}
where $\alpha,\beta$ are constants. Moreover, cross-differentiation of
(\ref{frR}) gives
\begin{equation}\label{rRla}
 \frac{f_{-2,2}}{T^{-1}(R)R}= -\frac{r_2}{r^2R}
  = T^{-1}\left(\frac{r_{-1}}{r^2R}\right)= \lambda(u_{-1},u,u_1).
\end{equation}

First, let $\lambda=0$, then $r_{-1}=r_2=0$. Let us denote $r=-1/p(u,u_1)$,
$R=h_{-1,2}$, then integration of (\ref{frR}) gives
\[
 f=T^{-1}(p)h_{-1}-pT^{-1}(h_2)+c(u_{-1},u,u_1).
\]
Solutions of the second equation (\ref{rR}) are:
\[
 (\beta=0)\quad h=aT(b)+\tilde a+T(\tilde b);\qquad
 (\beta\ne0)\quad h=-\frac{2}{\beta}\log(a+T(b))+\tilde a+T(\tilde b),
\]
where $a,b,\tilde a,\tilde b$ are arbitrary functions of $u_{-1},u,u_1$.
Redefining of $c$ makes possible to set $\tilde a=\tilde b=0$ without loss of
generality, and we arrive to the lattices of types III and IV.

Now, let $\lambda\ne0$. Let us denote $1/\lambda=-a_{-1,1}$, then the solution
of equations (\ref{frR}), (\ref{rRla}) reads
\[
 r=A(u,u_1,v),\quad
 R=a_{-1,1}T(a_{-1,1})\frac{A'}{A^2},\quad
 f=\frac{a_{-1,1}}{T^{-1}(A)A}+b(u_{-1},u,u_1),
\]
where $v=T(a_{-1})-a_1$ and prime denotes the derivative with respect to $v$.
Next, equations (\ref{rR}) amount to $A'=p(u,u_1)A+\alpha$, moreover,
$\beta=-2\alpha$. If $p=0$ then we obtain the type V by choosing $\alpha=1$ and
$A=v$, without loss of generality; it $p\ne0$ then we arrive to the type VI.
\end{proof}

\begin{remark}
The presented partition is not disjoint. More precisely, it follows from the
proof that the types I and II ($l=0$) do not mutually intersect, as well as the
types III--VI ($l=1$). However, there exist the lattices (in particular, the
second order symmetries of the equations from the Yamilov list), such that the
conditions (\ref{2.cond-4})--(\ref{2.cond-6}) are fulfilled for both values
$l=0,1$. These lattices cast simultaneously into two types.
\end{remark}

Further analysis of the integrability conditions requires a separate study of
the above types and it is beyond the scope of this paper. Instead of this, we
will consider a much more simpler classification problem under the additional
assumption of M\"obius invariance.

\section{Classification of M\"obius invariant equations}\label{s:class}

Let us introduce the notation
\[
 X=\frac{(u_1-u)(u_{-1}-u_{-2})}{(u_1-u_{-1})(u-u_{-2})},\quad
 Y=\frac{(u_1-u)(u-u_{-1})}{u_1-u_{-1}}.
\]
The quantities $X$ and $u_{,t}/Y$ are invariants of the group of
linear-fractional transformations $u_j\to\frac{\alpha u_j+\beta}{\gamma
u_j+\delta}$. The general form of the lattice equations (\ref{ut2}) which are
preserved under such substitutions reads
\begin{equation}\label{utYF}
 u_{,t}=YF(X,T(X)).
\end{equation}
The classification problem amounts to determination of the function $F$ from
the conditions (\ref{2.cond-1})--(\ref{2.cond-6}). Comparing to the general
case, here we start from a function of 2 variables instead of 5, which, of
course, is a radical simplification. Further on, we will denote
$F^{(i)}=\partial F(X,T(X))/\partial T^i(X)$.

The reasoning in this section is independent of the proof of Theorem
\ref{th:2.types}, but, actually, it proceeds along the same lines, mutatis
mutandis. In particular, it is convenient, like in Theorem \ref{th:2.types}, to
start the analysis from the condition (\ref{2.cond-4}). It takes the following
form, depending on the exponent $l$:
\begin{align}
\label{Z40}
 (l=0)&\qquad rY^2F^{(1)}=T^2(rY^2F^{(0)}),\\
\label{Z41}
 (l=1)&\qquad (u-u_{-1})^2T^{-1}(r)F^{(1)}=(u_2-u_1)^2T(rF^{(0)}).
\end{align}
The following lemma helps to resolve these equations.

\begin{lemma}\label{l:XX}
Let function $q$ of variables $u_j$ satisfy an equation of the form
\[
 (T^2-1)q=A(X,T(X)),
\]
then $q=\const$, $A=0$.
\end{lemma}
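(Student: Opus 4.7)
The plan is to peel the structure off the equation $(T^2-1)q=A(X,T(X))$ step by step.

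First, if $q$ depends on the variables $u_j$ for $j$ in some interval $[a,b]$, then $(T^2-1)q$ involves indices in $[a,b+2]$, while $A(X,T(X))$ involves only $u_{-2},\dots,u_2$; examining the extremal indices forces $a\ge-2$ and $b\le 0$, so we may take $q = q(u_{-2}, u_{-1}, u_0)$ and the equation becomes
\[
  q(u_0, u_1, u_2) - q(u_{-2}, u_{-1}, u_0) = A(X, T(X)).
\]
Next, holding $u_{-1},u_0,u_1$ fixed makes $X$ an invertible function of $u_{-2}$ alone and $T(X)$ an invertible function of $u_2$ alone; the left-hand side then splits as $\tilde q_1(u_2)-\tilde q_2(u_{-2})$, so the mixed partial $\partial_X\partial_Y A$ vanishes and $A(X,Y) = g(X) + h(Y)$.

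To pin down $g$ and $h$, I would observe that $\partial_{u_2}q(u_0,u_1,u_2) = h'(T(X))\,T(X)_{u_2}$ is independent of $u_{-1}$; differentiating once more in $u_{-1}$ and using the clean cross-ratio identity
\[
  \frac{T(X)_{u_{-1},u_2}}{T(X)_{u_{-1}}\,T(X)_{u_2}} = \frac{1}{T(X)}
\]
yields $(Yh')'=0$, so $h(Y) = \gamma\log Y + \const$; by the symmetric argument $g(X) = \alpha\log X + \const$. Expanding $\alpha\log X + \gamma\log T(X)$ into logarithms of differences $u_i-u_j$, the only term involving both $u_{-1}$ and $u_1$ is $-(\alpha+\gamma)\log(u_1-u_{-1})$; since $\partial_{u_{-1}}\partial_{u_1}$ annihilates the left-hand side, this forces $\gamma=-\alpha$.

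Finally, with $\gamma=-\alpha$ the combination $\log X - \log T(X)$ splits additively as a function $\phi_+$ of $(u_0,u_1,u_2)$ plus a function $\phi_-$ of $(u_{-2},u_{-1},u_0)$, and separating variables in the resulting equation gives two expressions for $q(a,b,c)$ whose compatibility reduces to $2\alpha[\log(b-a)-\log(c-b)] = \chi(c)-\chi(a)+\const$; the $b$-dependence forces $\alpha=0$, so $A=\const$, and then independent variation over the disjoint index sets $\{-2,-1\}$ and $\{1,2\}$ in $q(u_0,u_1,u_2)-q(u_{-2},u_{-1},u_0) = \const$ gives $q = \const$ and $A=0$. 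The main obstacle is identifying the logarithmic ansatz for $g,h$; everything rests on the clean cross-ratio identity displayed above (and its analogue for $X$), after which the remaining manipulations are routine bookkeeping.
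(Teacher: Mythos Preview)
Your proof is correct and follows essentially the same route as the paper's: restrict $q$ to $q(u_{-2},u_{-1},u_0)$, split $A$ additively via the mixed $\partial_{-2}\partial_2$ derivative, use the cross-ratio identity $(\log X)_{-2,1}=0$ (in your shifted form $(\log T(X))_{-1,2}=0$) to force the logarithmic ansatz, and then finish by direct inspection of the explicit expression. Your final step is spelled out more fully than the paper's terse ``it is easy to prove, by use of explicit expression of $X$'', but the strategy is identical.
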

\begin{proof}
It is clear that $q$ may depend, at most, on $u_{-2},u_{-1},u$. The
differentiation with respect to $u_{-2},u_2$ yields $A^{(0,1)}=0$, that is,
$A=a(X)-b(T(X))$. The differentiation with respect to $u_{-2},u_1$ gives
\[
 a'X_{-2}=-q_{-2},\quad a''X_{-2}X_1+a'X_{-2,1}=0.
\]
Taking into account the identity $(\log X)_{-2,1}=0$, the equation $a''X+a'=0$
yields $a=\kappa\log X+\lambda$. In a similar way, $b=\mu\log T(X)+\nu$, and
our equation takes the form
\[
 (T^2-1)(q)=\kappa\log X+\lambda-\mu T(\log X)-\nu.
\]
It is easy to prove, by use of explicit expression of $X$, that this equality
holds only if $\kappa=\mu=0$, $\lambda=\nu$.
\end{proof}

\begin{statement}\label{st:PSL-types}
Up to a constant factor, all solutions of equations (\ref{Z40}) or (\ref{Z41})
are the following:
\begin{alignat}{3}
\label{case0}
 (l=0)&\qquad&& F=g(X+T(X)),&\quad & r=\frac{1}{Y^2g'(X+T(X))},\\
\label{case1}
 (l=1)&&& F=g(X)+g(T(X)),&& r=\frac{1}{(u_1-u)^2},\\
\label{case2}
 (l=1)&&& F=\frac{1}{g(X)g(T(X))}+\delta, &&
          r=\frac{g(T(X))}{(u_1-u)^2},\quad \delta=\const.
\end{alignat}
\end{statement}
\begin{proof}
{\em Case $l=0$.} Let $rY^2F^{(0)}=q$, then equation (\ref{Z40}) takes the form
$T^2(q)F^{(0)}=qF^{(1)}$. According to Lemma \ref{l:XX}, $q=\const$, therefore
$F$ satisfies the equation $F^{(0)}=F^{(1)}$. This brings to solution
(\ref{case0}).

{\em Case $l=1$.} The function $h=(u-u_{-1})^2T^{-1}(r)$ satisfies the equation
\begin{equation}\label{hF}
 hF^{(1)}=T^2(h)T(F^{(0)}),\quad h=h(u_{-2},\dots,u_1).
\end{equation}
Let us prove that this implies $h=h(X)$. Differentiation with respect to
$u_{-2},u_2$ gives
\[
 \frac{h_{-2}}{h}+\frac{F^{(0,1)}}{F^{(1)}}X_{-2}=0,\quad
 (\log F^{(1)})^{(0,1)}=0,
\]
hence $F^{(1)}=a(X)b(T(X))$, $h=p(u_{-1},u,u_1)/a(X)$. In a similar way, we
obtain $F^{(0)}=c(X)d(T(X))$, $h=q(u_{-2},u_{-1},u)/d(X)$. Then $q$ satisfies
the relation
\[
 \frac{T^2(q)}{q}=\frac{a(X)b(T(X))}{d(X)c(T(X))}
\]
and Lemma \ref{l:XX} says that $q=\const$, as required. Now, equation
(\ref{hF}) can be rewritten as the system (cf (\ref{frR}))
\begin{equation}\label{FhH}
 F^{(0)}=\frac{H(X)}{h(T(X))},\quad F^{(1)}=\frac{H(T(X))}{h(X)}.
\end{equation}
The cross-differentiation gives
\[
 -F^{(0,1)}=\frac{H(X)h'(T(X))}{h(T(X))^2}
  =\frac{H(T(X))h'(X)}{h(X)^2} \quad\Rightarrow\quad
 h'=\lambda Hh^2,\quad \lambda=\const.
\]
If $\lambda=0$ then $h'=0$ and solution of (\ref{FhH}) is given by
(\ref{case1}) after some change of notation. If $\lambda\ne0$ then we
substitute $H=h'/(\lambda h^2)$ into (\ref{FhH}) and obtain the solution
(\ref{case2}) by integration.
\end{proof}

Now, the problem is reduced to specification of a function $g$ of one variable.
The use of relations (\ref{Z11})--(\ref{Z21}) makes possible to find it up to
few constant parameters which can be finally fixed by checking the conditions
(\ref{2.cond-1})--(\ref{2.cond-6}). The outline of this rather tedious,
although straightforward computation is given in the proof of the following
theorem.

\begin{theorem}\label{th:list}
Equations (\ref{utYF}) satisfying the necessary integrability conditions
(\ref{2.cond-1})--(\ref{2.cond-6}) are exhausted by the following list (up to a
constant factor in the right hand side):
\begin{align}
\label{SV2}
 u_{,t}&= Y(X+T(X)+c),\quad c=\const,\\
\label{SB}
 u_{,t}&= \frac{Y}{(X-1)(T(X)-1)},\\
\label{SGY}
 u_{,t}&= \frac{4Y(1-X-T(X))}{(2X-1)(2T(X)-1)},\\
\label{dSSK}
 u_{,t}&= \frac{Y(1-X-T(X))}{(X-1)(T(X)-1)},\\
\label{drSSK}
 u_{,t}&= \frac{Y}{(X^{1/2}+\varepsilon)(T(X^{1/2})+\varepsilon)},\quad
 \varepsilon^2=1.
\end{align}
In all cases, conditions (\ref{2.cond-4})--(\ref{2.cond-6}) are fulfilled for
$l=1$, in the case (\ref{SV2}) also for $l=0$.
\end{theorem}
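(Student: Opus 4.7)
My plan is to pick up from Statement~\ref{st:PSL-types}, which has already reduced the problem to three one-function families: (\ref{case0}) with $l=0$ and (\ref{case1}), (\ref{case2}) with $l=1$. It remains only to determine the one-variable function $g$ in each case, modulo the overall constant factor allowed in the statement, and then to verify the full set (\ref{2.cond-1})--(\ref{2.cond-6}). The strategy is to substitute the resulting $f=YF$ and $r$ into the integrability conditions and into the derived relations (\ref{Z11})--(\ref{Z22}), and to exploit the fact that their right-hand sides must lie in the image of $T-1$ or $T^2-1$.

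The main technical lever is the same one used to prove Lemma~\ref{l:XX}: the variables $u_{\pm 3}$, $u_{\pm 4}$ appear in $D_t$ but not in $X$ or $T(X)$, so differentiation with respect to them annihilates the auxiliary unknowns $\sigma,\theta,\omega,\mu,s$ and produces pointwise ODEs for $g$. A natural order of operations is to start with the simplest relations (\ref{Z11})--(\ref{Z13}), follow with (\ref{Z21})--(\ref{Z22}) and (\ref{2.cond-1}), and reserve (\ref{2.cond-2})--(\ref{2.cond-6}) for fixing residual discrete parameters.

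In case (\ref{case0}), the mixed relation $\rho_{-2,2}=0$ and its companions should restrict $g$ to an exponential/linear family $g'=C e^{kz}$; the higher conditions (\ref{2.cond-2}) and (\ref{2.cond-6}) should then eliminate the exponential branch, leaving $g$ linear and producing (\ref{SV2}). In cases (\ref{case1}) and (\ref{case2}) the analogous procedure yields an autonomous nonlinear ODE for $g$ whose solution manifold is finite dimensional; one then specialises via further conditions. I expect case (\ref{case1}) to give the partial-fraction form $g(X)=\text{const}/(2X-1)$ corresponding to (\ref{SGY}), and case (\ref{case2}), which carries an extra free constant $\delta$, to yield (\ref{SB}), (\ref{dSSK}), (\ref{drSSK}) according to the value of $\delta$; the split $\varepsilon=\pm1$ in (\ref{drSSK}) should emerge from a quadratic relation on the parameters of $g$, with $\varepsilon$ indexing its two roots.

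The hard part will be controlling case (\ref{case2}), where $g$ and $\delta$ interact and where the lowest-order conditions leave spurious branches that only the higher-order conditions (\ref{2.cond-3}) and (\ref{2.cond-5}) can eliminate; some care is also needed not to lose solutions by a premature normalisation of $g$. Once the finite list of candidates is isolated, the proof is concluded by direct substitution: for each of (\ref{SV2})--(\ref{drSSK}) one exhibits explicit $\sigma$, $\theta$, $\omega$, $r$, $s$, $\mu$ and the value of $l$ that satisfy (\ref{2.cond-1})--(\ref{2.cond-6}). This verification is computationally lengthy but, as the paper notes, straightforward once the correct ansatz is in hand.
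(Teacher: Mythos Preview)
Your outline matches the paper's proof: reduce via Statement~\ref{st:PSL-types} to the three one-function families, use (\ref{Z11})--(\ref{Z22}) together with (\ref{2.cond-1}) to cut each down to a finite list of candidate $g$'s, and finish by direct verification of (\ref{2.cond-1})--(\ref{2.cond-6}). Two small corrections to your expectations: case~(\ref{case1}) actually produces \emph{two} survivors, $g=X+c$ (giving (\ref{SV2}) again, consistent with its satisfying the $l=1$ conditions as well as $l=0$) and $g=1/(2X-1)$ (giving (\ref{SGY})); and in case~(\ref{case2}) the spurious branches are eliminated already by (\ref{Z21})--(\ref{Z22}), so (\ref{2.cond-3}) and (\ref{2.cond-5}) are not needed there.
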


\begin{proof}
Substitution of (\ref{case2}) into equations (\ref{Z12}), (\ref{Z13}) brings to
equation
\[
  Xg'(X)+\alpha g(X)+\beta=0,
\]
where $\alpha,\beta$ are integration constants. Its solutions are of the form
$g=X^k+\gamma$ or $g=\log X+\gamma$, up to a constant factor. Substitution into
(\ref{Z21}), (\ref{Z22}) rejects the second solution and refines the first one,
as well as the constant $\delta$ in (\ref{case2}); it turns out, that three
cases are possible:
\[
 g=X-1,\quad \delta=0;\qquad
 g=1/X-1,\quad \delta=-1;\qquad
 g=x^{1/2}\pm1,\quad \delta=0.
\]
Moreover, a direct check proves that all the rest integrability conditions are
fulfilled and we arrive to the lattice equations (\ref{SB}), (\ref{dSSK}) and
(\ref{drSSK}), respectively.

In the case (\ref{case1}), the relations (\ref{Z12})--(\ref{Z22}) are less
informative and give only the equation $Xg'=\alpha g^2+\beta g+\gamma$ with
arbitrary constant coefficients. Nevertheless, the analysis of condition
(\ref{2.cond-1}) proves that $g$ may be only one of the following:
\[
 g=1/X+c;\qquad g=X+c;\qquad g=1/(2X-1).
\]
The first case is rejected by inspection of the rest conditions; two other
cases pass the test and bring, respectively, to the lattice equations
(\ref{SV2}) and (\ref{SGY}).

Finally, substitution of the case (\ref{case0}) into (\ref{Z11}) yields the
equation $g''=kg'$. If $k=0$ then we come to the lattice (\ref{SV2}) again, the
case $k\ne0$ is rejected by checking (\ref{Z21}).
\end{proof}

\section{Miura type substitutions}\label{s:subst}

All found equations are either known or related with the known ones by
difference substitutions of Miura type. Namely:

--- equation (\ref{SV2}) is the second order symmetry of the
Schwarzian Volterra lattice;

--- equation (\ref{SB}) is the Schwarzian Bogoyavlensky lattice
\cite{Papageorgiou_Nijhoff_1996};

--- equation (\ref{SGY}) is related by B\"acklund transformation with the
Garifullin--Yamilov lattice \cite{Garifullin_Yamilov_2012,
Garifullin_Mikhailov_Yamilov_2014};

--- equations (\ref{dSSK}) and (\ref{drSSK}) are related by Miura type substitutions
with the discrete analog of the Sawada--Kotera equation
\cite{Adler_Postnikov_2011}.

\subsubsection*{\mdseries 1) {\em Symmetry of the Schwarzian Volterra lattice}}
\[
 u_{,t_1}=Y,\quad u_{,t_2}=Y(X+T(X)+c).
\]
The arbitrary parameter $c$ corresponds to addition of the first order
symmetry. If we choose $c=-1$ then the right hand side becomes a product of
linear factors:
\begin{eqnarray}
\label{uSV1}
 u_{,t_1}&=& \frac{(u_1-u)(u-u_{-1})}{u_1-u_{-1}},\\
\label{uSV2}
 u_{,t_2}&=& -\frac{(u_1-u)^2(u-u_{-1})^2(u_2-u_{-2})}
                  {(u_1-u_{-1})^2(u_2-u)(u-u_{-2})}.
\end{eqnarray}
The lattice (\ref{uSV1}) is the well-known Schwarzian version of the Volterra
lattice. The substitution
\[
 v=X=\frac{(u_1-u)(u_{-1}-u_{-2})}{(u_1-u_{-1})(u-u_{-2})}
\]
brings to the Volterra lattice and its symmetry:
\begin{eqnarray*}
 v_{,t_1}&=& v(v_1-v_{-1}),\\
 v_{,t_2}&=& v(v_1(v_2+v_1+v)-v_{-1}(v+v_{-1}+v_{-2}))
            -2v(v_1-v_{-1}).
\end{eqnarray*}

\subsubsection*{\mdseries 2) {\em  Schwarzian Bogoyavlensky lattice}}
\[
 u_{,t}=\frac{Y}{(X-1)(T(X)-1)}
\]
or, in the full form,
\begin{equation}\label{uSB}
 u_{,t}=\frac{(u_2-u)(u_1-u_{-1})(u-u_{-2})}
             {(u_2-u_{-1})(u_1-u_{-2})}.
\end{equation}
This equation is related with the modified Bogoyavlensky lattice
\begin{equation}\label{mB}
 v_{,t}=v(v+1)(v_2v_1-v_{-1}v_{-2})
\end{equation}
by any of the following substitutions:
\[
 v=\frac{X}{1-X}
  =\frac{(u_1-u)(u_{-1}-u_{-2})}{(u_1-u_{-2})(u-u_{-1})},\quad
 v=\frac{u_{-1}-u_1}{u_2-u_{-1}},\quad
 v=\frac{u-u_2}{u_2-u_{-1}}.
\]
This example is known and admits a generalization for the Bogoyavlensky
lattices of any order \cite{Papageorgiou_Nijhoff_1996,Nijhoff_1996}.

\subsubsection*{\mdseries 3) {\em Schwarzian Garifullin--Yamilov lattice}}
\[
 u_{,t}=\frac{4Y(1-X-T(X))}{(2X-1)(2T(X)-1)}
   =-2Y\left(\frac{1}{2X-1}+\frac{1}{2T(X)-1}\right).
\]
The substitution
\[
 w=\frac{1}{2X-1}
  =\frac{(u_1-u_{-1})(u-u_{-2})}
    {(u_1-u)(u_{-1}-u_{-2})-(u_1-u_{-2})(u-u_{-1})}
\]
brings to the lattice
\begin{equation}\label{wGY}
 w_{,t}=(w+1)\left(\frac{w(w_1+1)w_2}{w_1}
   -\frac{w(w_{-1}+1)w_{-2}}{w_{-1}}+w_1-w_{-1}\right).
\end{equation}
On the other hand, the same lattice appears as a result of the substitution
\[
 w=vv_1
\]
from the lattice
\begin{equation}\label{GY}
 v_{,t}=(v_1v+1)(vv_{-1}+1)(v_2-v_{-2})
\end{equation}
which was proven to be integrable in papers \cite{Garifullin_Yamilov_2012,
Garifullin_Mikhailov_Yamilov_2014} (notice also, that it appears under the
scalar reduction $V=(v,1)$ from the vectorial lattice $V_{,t}=\langle
V_1,V\rangle\langle V,V_{-1}\rangle(V_2-V_{-2})$ \cite{Adler_Postnikov_2008}).
A composition of these substitutions defines the B\"acklund transformation
between (\ref{SGY}) and (\ref{GY}).

\subsubsection*{\mdseries 4)
{\em Schwarzian discretization of the Sawada--Kotera equation}}
\[
 u_{,t}=\frac{Y(1-X-T(X))}{(X-1)(T(X)-1)}
  =Y\left(1-\frac{XT(X)}{(X-1)(T(X)-1)}\right).
\]
The right hand side of the lattice is factorizable into linear terms, like in
the case of Bogoyavlensky lattice (\ref{uSB}):
\begin{equation}\label{udSSK}
 u_{,t}=\frac{(u_1-u)(u-u_{-1})(u_2-u_{-2})}
             {(u_2-u_{-1})(u_1-u_{-2})}.
\end{equation}
Integrability is verified by the substitution
\[
 v=\frac{u-u_1}{u_2-u_{-1}}
\]
which brings to the discrete Sawada--Kotera equation
\cite{Hu_Clarkson_Bullough, Adler_Postnikov_2011}:
\begin{equation}\label{dSK}
 v_{,t}=v^2(v_2v_1-v_{-1}v_{-2})-v(v_1-v_{-1}).
\end{equation}
One more modification (cf (\ref{wGY}))
\begin{equation}\label{wdSK}
 w_{,t}=(w+1)\left(\frac{w(w_1+1)^2w_2}{w_1}
   -\frac{w(w_{-1}+1)^2w_{-2}}{w_{-1}}+(2w+1)(w_1-w_{-1})\right)
\end{equation}
is related with (\ref{udSSK}) by substitution
\[
 w=\frac{1}{X-1}
  =-\frac{(u_1-u_{-1})(u_0-u_{-2})}{(u_1-u_{-2})(u_0-u_{-1})}.
\]

\subsubsection*{\mdseries 5) {\em Equation (\ref{drSSK})}}
\[
 u_{,t}= \frac{Y}{(X^{1/2}+\varepsilon)(T(X^{1/2})+\varepsilon)},\quad
   \varepsilon^2=1.
\]
Let $\varepsilon=-1$, for the sake of definiteness, then the substitution
$w=1/(X^{1/2}-1)$ brings to equation (\ref{wdSK}). Therefore, equations
(\ref{drSSK}) and (\ref{dSK}) are related by the B\"acklund transformation.

\section{Some generalizations}\label{s:generalizations}

Returning to the general classification problem for the lattices (\ref{ut2}),
it should be noted that our restriction by the M\"obius invariant case is
rather artificial. Indeed, the obtained examples are not isolated, rather they
are members of more general families of equations which contain arbitrary
parameters. The M\"obius invariant equations are distinguished in these
families only by enlargement of the algebra of classical symmetries, but they
do not differ in terms of higher symmetries.

For instance, equation (\ref{SB}) is a particular case of integrable lattice
equation
\[
 u_{,t}=\frac{(u_2-au)(u_1-au_{-1})(u-au_{-2})}
             {(u_2-bu_{-1})(u_1-bu_{-2})}.
\]
Clearly, the M\"obius invariance is broken here. Nevertheless, the substitution
into the modified Bogoyavlensky lattice survives and takes the form
\[
 v_{,t}=v(bv+a)(v_2v_1-v_{-1}v_{-2}),\quad v=\frac{au_{-1}-u_1}{u_2-bu_{-1}}.
\]
Analogously, equation (\ref{dSSK}) is a particular case of the lattice
\[
 u_{,t}=\frac{(u_1-au)(u-au_{-1})(u_2-a^2u_{-2})}
             {a(u_2-au_{-1})(u_1-au_{-2})}
\]
which is related with (\ref{dSK}) by the substitution
\begin{equation}\label{uv-sub}
 v=\frac{au-u_1}{u_2-au_{-1}}.
\end{equation}
In both examples, the presented substitutions can be viewed as a linear
equation with respect to $u$. In fact, this is the Lax equation for the lattice
equation in the variables $v$, and $a$ serves as the spectral parameter, while
$u$ plays the role of wave function. Thus, consideration of these more general
lattice families is quite natural from the standpoint of the
corresponding spectral problems. Both examples admit generalizations for the
lattices of any order. Notice also, that the substitution (\ref{uv-sub}) can be
represented as a composition of substitutions
\[
 v=\frac{(f+a)f_{-1}}{f_1f_0f_{-1}+a},\quad f=-u_1/u,
\]
where variable $f$ satisfies the modified discrete Sawada--Kotera lattice
\cite{Adler_Postnikov_2011}
\[
 f_{,t}= \frac{f(f+a)}{f_1ff_{-1}+a}\left(
 \frac{f(f_1+a)(f_{-1}+a)(f_2f_1-f_{-1}f_{-2})}
   {(f_2f_1f+a)(ff_{-1}f_{-2}+a)}-f_1+f_{-1}\right).
\]
Analogous generalizations can be constructed also for other equations from the
list.

\subsection*{Acknowledgements}

This work was supported by the RFBR grant \# 16-01-00289a.


\end{document}